\newcommand{\pt}{pol\-y\-no\-mi\-al-time}
\newcommand{\cf}{con\-stant-fac\-tor}
\newcommand{\CF}{Con\-stant-fac\-tor}
\newtheorem{theorem}{Theorem}
\newtheorem{corollary}{Corollary}
\newtheorem{lemma}{Lemma}
\newtheorem{observation}{Observation}
\theoremstyle{definition}
\newtheorem{definition}{Definition}
\newcommand{\CARPlong}{\textsc{Capacitated Arc Routing}}
\newcommand{\CARP}{\textsc{CARP}}
\newcommand{\TSPlong}{\textsc{Traveling Salesperson}}
\newcommand{\TSP}{\textsc{TSP}}
\DeclareMathOperator\dist{dist}
\newcommand{\OPT}{\text{OPT}}
\newcommand{\OPTfulltri}{\text{OPT}^{\blacktriangledown}}
\newcommand{\chalftri}{c^{\triangledown}}
\newcommand{\cfulltri}{c^{\blacktriangledown}}
\newcommand{\prob}[5]{%
  \begin{center}
    \begin{minipage}{\columnwidth}
      #1
      \begin{compactdesc}
      \item[#2]#3
      \item[#4]#5
      \end{compactdesc}
    \end{minipage}
  \end{center}
}
\newcommand{\optprob}[3]{\prob{#1}{Instance:}{#2}{Task:}{#3}}
\def\NAT@spacechar{~}
  \title{\CF{} approximations for Capacitated Arc Routing
    without triangle inequality}
\date{}
  \author{René van Bevern}
  \author{Sepp Hartung}
  \author{André Nichterlein}
  \author{Manuel~Sorge}
  \affil{Institut f\"ur Softwaretechnik und Theoretische
    Informatik, TU Berlin, Germany, \\\texttt{\{rene.vanbevern, sepp.hartung, andre.nichterlein, manuel.sorge\}@tu-berlin.de}}
\begin{document}
\maketitle
\begin{abstract}
  \noindent Given an undirected graph with edge costs and edge
  demands, the \CARPlong{} problem (\CARP{}) asks for minimum-cost
  routes for equal-capacity vehicles so as to satisfy all demands.
  \CF{} \pt{} approximation algorithms were proposed for \CARP{} with
  triangle inequality, while \CARP{} was claimed to be NP-hard
   to approximate within any constant factor in general.  Correcting
   this claim, we show that any factor~$\alpha$ approximation for
   \CARP{} with triangle inequality yields a factor~$\alpha$
   approximation for the general \CARP{}.
\end{abstract}

\section{Introduction}
\noindent \citet{GolW81} introduced the \CARPlong{} problem in order
to model the search for minimum-cost routes for vehicles of equal
capacity that satisfy all ``customer'' demands.  Herein, ``customers''
are often the roads of a road network and, hence, are modeled as edges
of a graph with corresponding integer demands.  The vertices of the
graph can be thought of as road intersections.

\optprob{\CARPlong{} \textsc{Problem} (\CARP{})}{An undirected graph~$G=(V,E)$, a
  vehicle depot vertex~$v_0\in V$, edge costs~$c(e) \geq 0$ and edge demands~$d(e)\geq 0$ for every~$e \in E$, and a vehicle
  capacity~$W$.}{Find a set~$\mathcal C$ of cycles in~$G$, each
  corresponding to the route of one vehicle and each passing through
  the depot vertex~$v_0$, and a serving function~$s\colon \mathcal
  C\to 2^E$ such that
  \begin{compactenum}
  \item $\sum_{C\in\mathcal C}\sum_{e\in C}c(e)$ is minimized,
  \item each cycle~$C\in\mathcal C$ \emph{serves} a subset~$s(C)$ of
    edges of~$C$ such that
      $\sum_{e\in s(C)}d(e)\leq W$, and
    \item each edge~$e$ with $d(e)>0$ is served by exactly one cycle
      in~$\mathcal C$.
  \end{compactenum}
}
Well-known special cases of \CARP{} are the NP-hard \textsc{Rural
  Postman Problem}~\citep{LenK76}, where the vehicle capacity is
unbounded, and the \pt{} solvable \textsc{Chinese Postman
  Problem}~\citep{Edm65,EdmJ73}, where the vehicle capacity is
unbounded and all edges have positive demand.

\citet{Jan93} and \citet{Woe08c} gave \pt{} factor
$(7/2-3/W)$ approximation algorithms for \CARP{} when the edge cost function
satisfies the triangle inequality. That is, for any two
edges~$\{u,v\}$ and~$\{v,w\}$ there is an edge~$\{u,w\}$ such that
\[
c(\{u,w\})\leq c(\{u,v\})+c(\{v,w\}).
\]
\citet{GolW81} and \citet{Woe08c} claimed that \CARP{} is NP-hard to
approximate within any constant factor~$\alpha>0$.  However, a recent
arc routing survey~\citep{BNSW13} pointed out that the argument
leading to this claim is erroneous, thus calling for an alternative
proof for the inapproximability of \CARP{} or for a \cf{}
approximation.  We find the latter by proving the following theorem:

\newcommand{\mainthm}{%
\begin{theorem}\label{thm:mainthm}
  \CARP{} is \pt{} self-reducible, mapping any instance~$I$ to an instance~$I'$ in such a way that
  \begin{enumerate}[i)]
  \item $I'$~satisfies the triangle inequality and
  \item  a
  factor-$\alpha$ approximate solution for~$I'$ is \pt{}
  transformable into a factor-$\alpha$ approximate solution for~$I$.
  \end{enumerate}
\end{theorem}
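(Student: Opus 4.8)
The plan is to build $I'$ by taking a metric closure of the cost function while being careful not to make demand edges cheaper than they really are. Concretely, given $I=(G,v_0,c,d,W)$ with $G=(V,E)$, I would let $G'$ be the (multi)graph on the same vertex set $V$ that contains, for every pair $\{u,v\}$ of vertices, a \emph{shortcut edge} of demand~$0$ and cost $\dist_c(u,v)$ (the shortest-path distance in $G$ under $c$), and, in addition, for every demand edge $e=\{u,v\}\in E$ with $d(e)>0$ a parallel copy carrying its original demand $d(e)$ and its original cost $c(e)$. The capacity $W$ and depot $v_0$ stay the same. All-pairs shortest paths are computable in polynomial time, so this is a polynomial-time construction producing an instance of size polynomial in~$|I|$.

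Property~(i) should follow almost by definition: for any two edges $\{u,v\},\{v,w\}$ of $G'$ their costs are at least $\dist_c(u,v)$ and $\dist_c(v,w)$ respectively (the shortcut edges meet this with equality, the demand copies exceed it), and the shortcut edge $\{u,w\}$ has cost $\dist_c(u,w)\le \dist_c(u,v)+\dist_c(v,w)$ by the triangle inequality for shortest paths. Hence some edge $\{u,w\}$ witnesses the required inequality.

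For~(ii) I would exhibit cost-preserving maps in both directions, interpreting each vehicle route as a closed walk through $v_0$ (the standard convention in arc routing). Given a solution of $I$, traverse the same walks in $G'$, routing every served demand edge over its demand copy and every other traversal over the corresponding shortcut edge; since $\dist_c(u,v)\le c(\{u,v\})$ this costs no more, so $\OPT(I')\le\OPT(I)$. Conversely, given a solution of $I'$, replace each shortcut edge $\{u,v\}$ by a shortest $u$-$v$ path in $G$ and keep each demand copy as the original edge; this turns every route into a closed walk in $G$ of exactly the same cost, and an edge served in $I'$ (hence traversed via its demand copy) is still traversed in $G$ and can be served there. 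A factor-$\alpha$ solution of $I'$ thus maps to a solution of $I$ of cost at most $\alpha\,\OPT(I')\le\alpha\,\OPT(I)$, as required.

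The one genuine obstacle is exactly the point the construction is designed around: the obvious idea of replacing \emph{every} edge cost, including that of a demand edge, by its shortest-path distance is wrong. Serving a demand edge $e$ forces a vehicle to traverse $e$ itself and therefore to pay $c(e)$, which may strictly exceed $\dist_c(u,v)$; lowering that cost in $I'$ would let $I'$-solutions serve demands more cheaply than is possible in $G$, so that $\OPT(I')<\OPT(I)$ and the back-translation would inflate the cost. Keeping demand edges at their true cost while offering cheap zero-demand shortcuts purely for deadheading is what makes both inequalities go through; verifying that this separation leaves the serving function well defined under both maps is the step that needs the most care.
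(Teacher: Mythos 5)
Your translations in part~(ii) are sound: with the parallel-copy construction one indeed gets $\OPT(I')=\OPT(I)$, both maps are cost-preserving, and a factor-$\alpha$ solution for~$I'$ maps back to cost at most $\alpha\cdot\OPT(I')\le\alpha\cdot\OPT(I)$. The genuine gap is in part~(i). Your~$I'$ is a multigraph in which every positive-demand edge $e=\{u,v\}$ keeps its original cost~$c(e)$, which may strictly exceed $\dist_c(u,v)$, the cost of the parallel zero-demand shortcut. Such an instance satisfies the paper's triangle inequality only through the existential loophole in its phrasing (``there \emph{is} an edge $\{u,w\}$ such that\dots''), with the cheap shortcut always serving as witness. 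It does not satisfy it in the sense the theorem is needed for: on a complete \emph{simple} graph the triangle inequality is equivalent to every edge cost equalling the shortest-path distance between its endpoints, and that is the property on which the factor-$(7/2-3/W)$ algorithms of Jansen and W{\o}hlk rely --- and making those algorithms applicable is the entire point of the theorem (see the paper's corollary). In your~$I'$, precisely the edges that make \CARP{} without triangle inequality problematic --- positive-demand edges costing more than the distance between their endpoints --- survive \emph{unchanged}, merely dressed up with cheap parallel copies. So the reduction does not land in the class ``\CARP{} with triangle inequality'' as the cited algorithms understand it; it lands in ``\CARP{} on multigraphs with metric deadheading costs but non-metric serving costs,'' and you give no argument that the known algorithms are correct (with the same ratio) on that class. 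Establishing this would mean reopening their analyses, which is exactly what a black-box, approximation-preserving reduction is supposed to avoid. (There is also the formal point that the problem, as defined, takes a graph rather than a multigraph, so $I'$ is not literally a \CARP{} instance; but the metric issue is the substantive one.)

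The idea you are missing is how to lower the cost of a positive-demand edge all the way to $\dist_c(u,v)$ \emph{without} breaking factor preservation --- the very step you argue is ``wrong.'' It is not wrong, it just requires accounting: every feasible solution must traverse each such edge at least once, and by an exchange argument (the paper's \autoref{lem:fulltri}) one may assume it is traversed \emph{exactly} once. Consequently, switching from the original costs to the fully metric costs~$\cfulltri$ lowers the cost of every (normalized) solution by the \emph{same} constant $r=\sum_{\{u,v\}\in R}\bigl(c(\{u,v\})-\dist_c(u,v)\bigr)$, so that $\OPT=\OPTfulltri+r$. A factor-$\alpha$ solution for the metric instance then translates back to cost $\alpha\cdot\OPTfulltri+r\le\alpha\,(\OPTfulltri+r)=\alpha\cdot\OPT$, using $\alpha\ge 1$. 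This constant-offset argument is the heart of the paper's proof; your construction is designed to sidestep it, but the price is an instance~$I'$ that is not genuinely metric, so claim~(i) fails in the sense that matters.
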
}
\mainthm{}
\noindent In terms of approximation-preserving \pt{} reductions~\citep{WilS11}, we
more specifically show a $(1,1)$ L-reduction from \CARP{} without
triangle inequality to \CARP{} with triangle inequality{}.

\autoref{thm:mainthm} and the factor $(7/2-3/W)$ approximation
given by \citet{Jan93} and \citet{Woe08c} for \CARP{} with triangle
inequality then immediately yields the following corollary, which
answers Challenge~6 of the above-mentioned arc routing survey~\citep{BNSW13}.

\begin{corollary}
  There is a \pt{} factor $(7/2-3/W)$ approximation for
  \CARPlong{}, even if the edge cost function does not respect the
  triangle inequality.
\end{corollary}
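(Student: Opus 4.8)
The plan is to take $I'$ to be the metric closure of $I$. Concretely, I would let $G'$ be the complete graph on the vertex set $V$, keep the depot~$v_0$ and the capacity~$W$, define the new cost $c'(\{u,v\}) := \dist(u,v)$ as the length of a shortest $u$--$v$ path in~$G$ with respect to~$c$, and set $d'(\{u,v\}) := d(\{u,v\})$ whenever $\{u,v\}\in E$ and $d'(\{u,v\}) := 0$ otherwise. (We may assume $G$ is connected and contains all positive-demand edges in the component of~$v_0$, since otherwise~$I$ is infeasible.) Computing~$c'$ is a single all-pairs-shortest-paths call, so the construction is \pt{}, and part~(i) is immediate: shortest-path lengths obey $\dist(u,w)\le\dist(u,v)+\dist(v,w)$, so for any two edges $\{u,v\},\{v,w\}$ the edge $\{u,w\}$, present because $G'$ is complete, witnesses the triangle inequality.

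For part~(ii) I would introduce the \emph{expansion} of an $I'$-solution~$S'$ into an $I$-solution~$S$: traverse the closed walks of~$S'$ in~$G'$ and replace every used edge by a walk in~$G$. A \emph{non-served} metric edge $\{u,v\}$ is replaced by a shortest $u$--$v$ path, costing exactly $\dist(u,v)=c'(\{u,v\})$, whereas a \emph{served} metric edge $\{u,v\}$ (which has $d(\{u,v\})>0$, hence $\{u,v\}\in E$) is replaced by the original edge~$\{u,v\}$ itself and served there at cost $c(\{u,v\})$. This yields a feasible~$S$ with the same cyclic and serving structure, so per-cycle demands and the bound~$W$ are unchanged. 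Writing $\Delta := \sum_{e:\,d(e)>0}\bigl(c(e)-\dist(e)\bigr)\ge 0$ and using that each positive-demand edge is served exactly once, the two replacement rules give $c(S)=c'(S')+\Delta$, where $c'(S')$ denotes the total $c'$-cost of~$S'$.

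The crux is to show $\OPT(I)=\OPT(I')+\Delta$. Expanding an optimal $I'$-solution gives $\OPT(I)\le\OPT(I')+\Delta$. For the reverse I would map any $I$-solution~$S$ to an $I'$-solution~$S'$ by reinterpreting every traversed original edge $\{x,y\}$ as the metric edge $\{x,y\}$ (its cost drops from $c(\{x,y\})$ to $\dist(x,y)\le c(\{x,y\})$) and routing the served edges through their metric edges; the served traversals alone account for a cost drop of exactly~$\Delta$ and every other replacement only decreases cost, so $c'(S')\le c(S)-\Delta$ and hence $\OPT(I')\le\OPT(I)-\Delta$. Combining both bounds yields the claimed identity. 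Now given a factor-$\alpha$ approximate~$S'$ for~$I'$, so $c'(S')\le\alpha\,\OPT(I')$ with $\alpha\ge 1$, its expansion satisfies $c(S)=c'(S')+\Delta\le\alpha\,\OPT(I')+\alpha\Delta=\alpha\,\OPT(I)$, establishing~(ii). In L-reduction terms this is a $(1,1)$ L-reduction, since $\OPT(I')\le\OPT(I)$ and $c(S)-\OPT(I)=c'(S')-\OPT(I')$.

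The main obstacle is precisely the mismatch between the true serving cost $c(e)$ of a demand edge in~$I$ and its reduced metric cost $\dist(e)$ in~$I'$: a shortest $u$--$v$ path need not use the edge $\{u,v\}$, so a served metric edge cannot simply be lifted to a shortest path. This is exactly what forces the additive correction~$\Delta$, and the delicate point is verifying that the \emph{same}~$\Delta$ governs both the expansion and the optimum relation $\OPT(I)=\OPT(I')+\Delta$; once that is in place, the approximation guarantee closes using only $\alpha\ge 1$.
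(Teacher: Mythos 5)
Your construction and final accounting are essentially the paper's: your metric closure~$I'$ is the paper's instance~$(G,\cfulltri{},d,W)$ (the paper first completes~$G$ with cost-$\infty$ edges), your correction term~$\Delta$ equals the paper's~$r$ (edges with $c(e)=\dist_c(u,v)$ contribute zero), and your closing estimate $c(S)=c'(S')+\Delta\le\alpha\,\OPT(I')+\alpha\Delta=\alpha\,\OPT(I)$ is exactly how the paper absorbs the correction term using $\alpha\ge 1$. There is, however, one genuine gap, and it is precisely the point to which the paper devotes \autoref{lem:fulltri}: your expansion rule classifies \emph{edges} (served vs.\ non-served), not \emph{traversals}. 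A feasible --- even $\alpha$-approximate --- solution~$S'$ of the metric instance may traverse a positive-demand edge $e=\{u,v\}$ several times, serving it on one traversal and merely passing through on the others. As literally stated, your rule replaces every occurrence of such a ``served metric edge'' by the original edge at cost~$c(e)$, so if $e$ is traversed $k$~times you pay $k\bigl(c(e)-\dist_c(u,v)\bigr)$ extra, and the identity $c(S)=c'(S')+\Delta$ fails. The excess is not harmless: if $\dist_c(u,v)=0<c(e)$, then $S'$ can traverse the metric edge~$e$ arbitrarily often at zero cost while remaining $\alpha$-approximate, yet each expanded copy costs $c(e)$, which can be as large as~$\OPT(I)$; so the expanded solution need not be within any constant factor of~$\OPT(I)$. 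The fact you invoke --- each positive-demand edge is \emph{served} exactly once --- gives one serving occurrence per edge, but does not bound how often it is \emph{traversed}.

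The repair is exactly what the paper proves. Either state the rule per traversal: only the single serving occurrence of~$e$ is replaced by the original edge, and every other occurrence (including further occurrences of served edges) is replaced by a shortest path; then $c(S)=c'(S')+\Delta$ holds verbatim. Or first normalize~$S'$ as in \autoref{lem:fulltri}: whenever $c(e)>\dist_c(u,v)$, a shortest $u$--$v$ path avoids~$e$ and has the same $\cfulltri{}$-cost, so one may assume every edge of~$R$ occurs exactly once in the entire solution, after which your per-edge rule is unambiguous. With either fix, your cost identity, the equality $\OPT(I)=\OPT(I')+\Delta$ (the paper only states $\OPT(I)\ge\OPT(I')+r$, but its proof gives equality too), and the final bound all go through, and your argument coincides with the paper's; the corollary then follows, as you implicitly intend, by running the known factor-$(7/2-3/W)$ algorithm on~$I'$.
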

\noindent Before proving \autoref{thm:mainthm}, we quickly recall the erroneous
argument~\citep{GolW81,Woe08c} for the approximation hardness of
\CARP{} without triangle inequality.

\section{Erroneous argument towards approximation hardness}
\noindent \citet{GolW81} and \citet{Woe08c} claim that \CARP{} without
triangle inequality is NP-hard to approximate within any constant
factor~$\alpha>0$.  Their claim is based on the fact that the
\TSPlong{} problem (\TSP{}) without triangle inequality is NP-hard to
approximate within any constant factor~$\alpha>0$~\citep{SahG76}.

\looseness=-1 They use the following \pt{} transformation from \TSP{} to
\CARP{}. Given a \TSP{} instance, split each vertex of the input
\TSP{} graph and join them by an edge of demand one and cost zero and
set the vehicle capacity~$W$ to be at least the number of input
vertices.  The remaining edges in the \CARP{} instance inherit their
cost from the input \TSP{} instance and have demand zero.

Clearly, any solution for the input \TSP{} instance translates into a
solution for \CARP{} with the same cost.  However, the reverse is not
true: while \TSP{} allows every vertex to be visited at most once,
\CARP{} imposes no such restrictions. Hence, in order to reach a
positive-demand edge from another positive-demand edge, always a
shortest path may be used in the optimal \CARP{}~tour.
A~counterexample to the correctness of the above reduction is given in \autoref{fig:counterex}.
\begin{figure}[t]
  \centering
  \begin{tikzpicture}[node distance=2.6cm and 2.6cm]
    \tikzstyle{vert}=[circle,draw,fill=black,minimum size=3pt,inner sep=0pt]
    \tikzstyle{touredge}=[]
    \tikzstyle{nontouredge}=[dotted]
    \tikzstyle{reqedge}=[very thick]

    \node[vert] (v1) {};
    \node[vert] (v2) [below=of v1] {}
    edge[touredge] node[left] {$\ell$} (v1);
    \node[vert] (v3) [right=of v2] {}
    edge[nontouredge]node[left,pos=0.7] {$\ell$} (v1)
    edge[touredge] node[above] {1} (v2);
    \node[vert] (v4) [above=of v3] {}
    edge[touredge] node[below] {1} (v1)
    edge[nontouredge]node[left, pos=.7] {$\ell$} (v2)
    edge[touredge] node[right] {1} (v3);
    
    \begin{scope}[node distance=2cm and 2cm]
      \node[vert, yshift=-.3cm] (v6) [right=2cm of v4] {}; 
      \node[vert] (v5) [above right=.4cm of v6] {}
      edge[reqedge] (v6);
      \node[vert] (v7) [below=of v6] {}; 
      \node[vert] (v8) [below right=.4cm of v7] {}
      edge[reqedge] (v7);
      \node[vert] (v9) [right=of v8] {}
      edge[touredge] (v7)
      edge[touredge] (v8);
      \node[vert] (v10) [above right=.4cm of v9] {}
      edge[reqedge] (v9);
      \node[vert] (v11) [above=of v10] {}
      edge[touredge] (v10);
      \node[vert] (v12) [above left=.4cm of v11] {}
      edge[reqedge] (v11)
      edge[touredge] (v9)
      edge[touredge] (v6)
      edge[touredge] (v5);

      \draw[nontouredge] (v5)--(v7);
      \draw[nontouredge] (v6)--(v7);
      \draw[nontouredge] (v5)--(v8);
      \draw[nontouredge] (v6)--(v8);
      \draw[nontouredge] (v5)--(v9);
      \draw[nontouredge] (v6)--(v9);
      \draw[nontouredge] (v5)--(v10);
      \draw[nontouredge] (v6)--(v10);

      \draw[nontouredge] (v7)--(v10);
      \draw[nontouredge] (v8)--(v10);

      \draw[nontouredge] (v9)--(v11);
      \draw[nontouredge] (v10)--(v12);

      \draw[nontouredge] (v11)--(v5);
      \draw[nontouredge] (v11)--(v6);

      \draw[nontouredge] (v7)--(v11);
      \draw[nontouredge] (v7)--(v12);
      \draw[nontouredge] (v8)--(v11);
      \draw[nontouredge] (v8)--(v12);
    \end{scope}
  
  \end{tikzpicture}
  \caption{Counterexample to correctness of the canonical reduction
    from \TSP{}. The left shows an instance of \TSP{} along with an
    optimal tour (solid edges) of cost~$3 + \ell$. On the right, the
    resulting \CARP{} instance is shown (edge costs are as in their
    \TSP{} counterparts, solid thick edges have positive demand) along
    with an optimal tour of cost six (solid and thick edges). Scaling
    up~$\ell$ shows that the \CARP{} tour may be arbitrarily less
    costly than the \TSP{} tour.}
  \label{fig:counterex}
\end{figure}
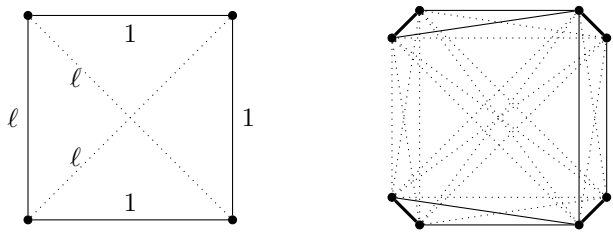

The reduction from \TSP{} to \CARP{} used by \citet{GolW81} is correct
when reducing from \TSP{} with triangle inequality.  In this case,
however, \TSP{} is factor-3/2 approximable using the algorithm by
\citet{Chr73} and, thus, the presented reduction does not imply
inapproximability for \CARP{}.

\section{\CF{} approximations for CARP\\without triangle
  inequality}
\noindent In the following, we show how to obtain \cf{} approximations for
\CARP{} without triangle inequality. Therein, we assume that the input graph is a complete graph,
since missing edges can be simulated by edges of cost~$\infty$.
First we adjust the edge costs so that the triangle inequality is
satisfied and then apply any \cf{} approximation algorithm
for \CARP{} with triangle inequality \citep{Jan93,Woe08c}.

\looseness=-1 In order to transform the edge costs so that the
triangle inequality holds, we set the cost of each edge~$\{u,v\}$ to
the cost of a shortest path between~$u$ and~$v$.  For zero-demand
edges, this transformation is correct: since edges can be traversed
more than once, instead of using a zero-demand edge, an optimal
solution can always use a shortest path between its endpoints.  For
edges with positive demand, the key idea is to visit these edges only
once when serving them.  Whenever the edge is traversed without
serving it, an optimal solution can again use the shortest path
between the endpoints.  We now formalize this idea.

We start by formally describing the transformation of an edge cost
function~$c$ into a cost function~$\chalftri{}$ that satisfies the
triangle inequality for zero-demand edges and then into a cost
function~$\cfulltri{}$ fully satisfying the triangle inequality.

\begin{definition}\label{def:distfuncs}
  Let $(G,c,d,W)$~be a \CARP{} instance.  We define the following
  modified edge cost functions.
  \begin{align*}
    \chalftri{}&\colon E(G)\to \mathbb N,\{u,v\}\mapsto
    \begin{cases}
      c(\{u,v\})&\text{ if }d(e)>0\\
      \dist_c(u,v)&\text{ otherwise},
    \end{cases}\\
    \cfulltri{}&\colon E(G)\to\mathbb N,\{u,v\}\mapsto\dist_c(u,v). 
  \end{align*}
  Herein, $\dist_c$~is the cost of a shortest path between~$u$ and~$v$
  with respect to the cost function~$c$. Finally, we use
  \begin{description}
  \item[$R:=\{e\in E(G)\mid d(e)>0\wedge \chalftri(e)\ne
    \cfulltri{}(e)\}$] to denote the set of positive-demand edges with costs exceeding the length of the shortest path between its endpoints and
  \item[$r:=\sum_{e\in R}(\chalftri{}(e)-\cfulltri{}(e))$] to denote
    the total cost decrease of the edges in~$R$
    from~$\chalftri{}$ to~$\cfulltri{}$.
  \end{description}
\end{definition}
\noindent It is easy to verify that~$\cfulltri{}$ satisfies the
triangle inequality.  Moreover, since any solution contains each edge
in~$R$ at least once, the following observation immediately follows.

\begin{observation}\label{obs:obsobs}
  Let $(G,c,d,W)$~be a \CARP{} instance.  Any feasible solution
  to~$(G,\chalftri{},d,W)$ of cost~$w$ has cost at most $w-r$
  in~$(G,\cfulltri{},d,W)$.
\end{observation}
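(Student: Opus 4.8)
The plan is to track, edge by edge, how the cost of a fixed feasible solution changes when the cost function is switched from~$\chalftri{}$ to~$\cfulltri{}$. The two functions agree everywhere except on the set~$R$: on every zero-demand edge both equal~$\dist_c(u,v)$, and on every positive-demand edge~$\chalftri{}$ returns $c(e)$ while $\cfulltri{}$ returns~$\dist_c(u,v)$, so by the definition of~$R$ they differ exactly on~$R$. Moreover, since the direct edge~$\{u,v\}$ is itself a $u$-$v$ path, $\dist_c(u,v)\le c(\{u,v\})$, and hence every difference $\chalftri{}(e)-\cfulltri{}(e)$ for $e\in R$ is nonnegative.

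First I would fix a feasible solution~$(\mathcal C,s)$ of $\chalftri{}$-cost~$w$ and, for each edge~$e$, let $n_e$~count the total number of cycles in~$\mathcal C$ traversing~$e$, with multiplicity. The $\chalftri{}$- and $\cfulltri{}$-costs of this solution are then $\sum_e n_e\cdot\chalftri{}(e)$ and $\sum_e n_e\cdot\cfulltri{}(e)$, respectively, so their difference is exactly $\sum_{e\in R}n_e\,(\chalftri{}(e)-\cfulltri{}(e))$, all other edges contributing zero.

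The one step requiring care is lower-bounding this difference by~$r$. For that I would invoke feasibility: every edge~$e$ with $d(e)>0$ must be served by some cycle~$C\in\mathcal C$, and since a cycle only serves edges it contains, i.e.\ $s(C)\subseteq E(C)$, that cycle traverses~$e$; thus $n_e\ge 1$ for every positive-demand edge, in particular for every $e\in R$. Because each summand is nonnegative, replacing each~$n_e$ by its lower bound~$1$ yields $\sum_{e\in R}n_e\,(\chalftri{}(e)-\cfulltri{}(e))\ge\sum_{e\in R}(\chalftri{}(e)-\cfulltri{}(e))=r$, so the $\cfulltri{}$-cost is at most~$w-r$. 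I expect no real obstacle beyond making explicit that serving an edge forces traversing it, and that extra traversals only decrease the $\cfulltri{}$-cost further, so that using $n_e\ge 1$ still gives a valid upper bound.
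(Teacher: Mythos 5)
Your proof is correct and follows exactly the paper's (one-sentence) justification: every positive-demand edge, hence every edge of~$R$, is traversed at least once by any feasible solution, and since $\chalftri{}$ and $\cfulltri{}$ differ only on~$R$ with nonnegative differences summing to~$r$, the cost drops by at least~$r$. Your version merely makes the multiplicity bookkeeping ($n_e\ge 1$ and extra traversals only helping) explicit, which the paper leaves implicit.
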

\noindent Enforcing the triangle inequality on all edges with zero
demand does not change the cost of an optimal solution:

\begin{lemma}\label{lem:halftri}
  Let $(G,c,d,W)$~be a \CARP{} instance.
  \begin{enumerate}[i)]
  \item\label{halftri2} Any feasible solution for~$(G,c,d,W)$ is a
    feasible solution of at most the same cost
    for~$(G,\chalftri{},d,W)$ and
  \item\label{halftri1} any feasible solution
    for~$(G,\chalftri{},d,W)$ can be transformed into a feasible
    solution of the same cost for~$(G,c,d,W)$ in polynomial time.
  \end{enumerate}
\end{lemma}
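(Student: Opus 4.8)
The plan is to prove the two directions separately, since they capture genuinely different phenomena.

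For part~\ref{halftri2}, the key observation is that $\chalftri{}$ only ever \emph{decreases} edge costs: for positive-demand edges we have $\chalftri(e)=c(e)$, and for zero-demand edges $\chalftri(e)=\dist_c(u,v)\le c(\{u,v\})$ by definition of shortest-path distance. Hence I would take any feasible solution $(\mathcal C, s)$ for $(G,c,d,W)$ and argue it remains feasible and no more costly for $(G,\chalftri{},d,W)$. Feasibility is immediate: the graph, demands, capacity, and serving function are unchanged, so the capacity constraints $\sum_{e\in s(C)}d(e)\le W$ and the ``each positive-demand edge served exactly once'' condition are untouched. For the cost bound I would observe that the cost of each cycle can only go down edge-by-edge, so the total cost under $\chalftri{}$ is at most the total cost under $c$. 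This direction should be essentially immediate.

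Part~\ref{halftri1} is the substantive direction and where the main work lies. Here I am given a feasible solution under the reduced cost function $\chalftri{}$ and must reconstruct a genuine solution in the original graph of the \emph{same} cost. The subtlety is that $\chalftri{}$ assigns zero-demand edges their shortest-path cost, which may be strictly cheaper than their actual cost $c$; so a cycle cheap under $\chalftri{}$ might not correspond to any equally cheap walk in the original instance. The plan is to repair this locally: for each cycle $C$ in the given solution and each zero-demand edge $\{u,v\}$ traversed by $C$, replace that edge by a fixed shortest $u$--$v$ path in $G$ (with respect to $c$), which by definition has cost exactly $\chalftri(\{u,v\})=\dist_c(u,v)$. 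The positive-demand edges, whose $\chalftri{}$-cost already equals their $c$-cost, are left in place and their serving assignment $s$ is inherited unchanged.

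I would then verify three things about the resulting object. First, each repaired cycle is still a closed walk through the depot, since we only substituted paths for edges between the same endpoints; any such walk can be regarded as a cycle in the sense of the problem (which permits repeated vertices and edges). Second, the serving function is unaffected: we serve exactly the same positive-demand edges with exactly the same cycles, so both the capacity constraints and the ``served exactly once'' condition carry over verbatim. Third, and crucially, the cost is \emph{exactly} preserved: each substituted zero-demand edge contributed $\dist_c(u,v)$ under $\chalftri{}$ and its replacement path contributes the same $c$-cost $\dist_c(u,v)$, while every positive-demand edge contributes $c(e)=\chalftri(e)$ in both instances. The main obstacle to watch is purely bookkeeping rather than conceptual: ensuring that substituting shortest paths does not accidentally reroute service of a positive-demand edge or create a walk that fails to pass through the depot, and confirming that the whole replacement is polynomial-time (shortest paths are computable in polynomial time and each cycle has polynomial length, so this is routine).
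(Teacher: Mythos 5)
Your proposal is correct and follows essentially the same route as the paper's own proof: part~i) is immediate from $\chalftri(e)\le c(e)$ for every edge, and part~ii) replaces each zero-demand edge $\{u,v\}$ in each cycle by a shortest $u$--$v$ path with respect to~$c$, keeping the serving function unchanged, which preserves feasibility (since cycles may repeat vertices and edges) and preserves the cost exactly. The bookkeeping points you flag (closed walks through the depot, polynomial running time) are exactly the ones the paper dispatches in one sentence, so there is nothing missing.
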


\begin{proof}
  \eqref{halftri2} is trivial, since $\chalftri(e)\leq c(e)$ for all
  edges $e\in E(G)$.

  \eqref{halftri1} Let $(\mathcal C,s)$~be a feasible solution
  for~$(G,\chalftri{},d,W)$.  We obtain a modified set~$\mathcal C'$
  of cycles in polynomial time as follows. In each cycle~$C\in\mathcal
  C$, replace each edge~$\{u,v\}$ with~$d(\{u,v\})=0$ by a shortest
  path between~$u$ and~$v$ with respect to~$c$.  Then, $(\mathcal
  C',s)$ is a feasible solution for~$(G,c,d,W)$ since edges and
  vertices may be shared between cycles and may be used multiple
  times.  Moreover, by choice of~$\chalftri{}$, the cost of the
  cycles~$\mathcal C'$ with respect to~$c$ is the same as that
  of~$\mathcal C$.
\end{proof}
If we enforce the triangle inequality for all input edges, then we may
assume that an optimal solution uses every edge with positive demand
and modified cost at most once:

\begin{lemma}\label{lem:fulltri}
  Let $(G,c,d,W)$~be a \CARP{} instance.  Any feasible solution for
  $(G,\cfulltri{},d,W)$ can be transformed into a feasible
  solution~$(\mathcal C,s)$ with the same cost in polynomial time such
  that every edge in~$R$ is contained in exactly one cycle~$C$
  of~$\mathcal C$ and is contained in~$C$ exactly once.
\end{lemma}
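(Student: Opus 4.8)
The plan is to start from a feasible solution $(\mathcal C,s)$ for $(G,\cfulltri{},d,W)$ and repair it one edge of~$R$ at a time. For each $e=\{u,v\}\in R$ I would drive the number of traversals of~$e$ across all of~$\mathcal C$ down to exactly one, namely the traversal used to serve it. Since $e$~has positive demand it is served by exactly one cycle $C_e\in\mathcal C$, which therefore traverses~$e$ at least once; I fix one such traversal as the serving one and declare every other occurrence of~$e$ in~$\mathcal C$ (whether inside~$C_e$ or in another cycle) \emph{superfluous}.

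The central step is to eliminate each superfluous traversal by a local detour. Because $e\in R$ we have $c(e)>\dist_c(u,v)$, so no shortest $c$-path from~$u$ to~$v$ uses~$e$; I fix one such path~$P_e$. I would then replace every superfluous occurrence of the single edge~$e$ in a walk by the sub-walk~$P_e$. Each cycle stays a closed walk through~$v_0$, and the serving function~$s$ is left completely untouched, so the capacity constraints and the ``served by exactly one cycle'' condition still hold and feasibility is preserved.

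The crux is that this rewriting preserves the $\cfulltri{}$-cost exactly. Writing $P_e=w_0,\dots,w_m$ with $w_0=u$ and $w_m=v$, every edge $\{w_{i-1},w_i\}$ lies on a shortest $c$-path and is thus itself a shortest $u$--$v$ segment, giving $c(\{w_{i-1},w_i\})=\dist_c(w_{i-1},w_i)=\cfulltri{}(\{w_{i-1},w_i\})$; summing yields $\sum_i\cfulltri{}(\{w_{i-1},w_i\})=\sum_i c(\{w_{i-1},w_i\})=\dist_c(u,v)=\cfulltri{}(e)$, so swapping~$e$ for~$P_e$ changes nothing. Moreover each edge of~$P_e$ satisfies $c=\dist_c$ and hence lies outside~$R$, so detours never create a fresh occurrence of any edge of~$R$; consequently the repairs for distinct edges of~$R$ do not interfere and may be performed sequentially.

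I expect the main obstacle to be exactly this cost bookkeeping: the equality $\cfulltri{}(e)=\sum_i\cfulltri{}(\{w_{i-1},w_i\})$ is what makes the detour free, and it rests on the two-sided fact that every edge of a shortest path is a shortest path (giving $c=\cfulltri{}$ edge by edge) together with the triangle inequality for~$\cfulltri{}$. The remaining points---that $P_e$ avoids~$e$, that detours stay clear of~$R$, and that the number of superfluous traversals is polynomially bounded in the size of the given solution so the whole procedure runs in polynomial time---are then routine, and they leave a solution in which every edge of~$R$ is traversed exactly once and within exactly one cycle, as required.
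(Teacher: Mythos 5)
Your proposal is correct and follows essentially the same route as the paper's own proof: since $c(e)>\dist_c(u,v)$ for $e=\{u,v\}\in R$, a shortest $c$-path $p_e$ avoids~$e$ and has $\cfulltri{}$-cost equal to $\cfulltri{}(e)$, so every non-serving occurrence of~$e$ (and all but one occurrence in the serving cycle) can be replaced by~$p_e$ at no change in cost. Your two explicit supporting observations---that edges of a shortest path satisfy $c=\dist_c$, hence the detour's cost equality holds edge by edge, and that such edges lie outside~$R$, so the per-edge repairs never reintroduce $R$-edges and can be done sequentially---are points the paper leaves implicit, and they only strengthen the argument.
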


\begin{proof}
  Observe that, for each edge~$e:=\{u,v\}\in R$, the condition
  $\chalftri{}(e)\ne \cfulltri{}(e)$ implies that there is a shortest path~$p_e$
  between~$u$ and~$v$ with respect to~$c$ that does not contain~$e$
  but has the same cost as~$e$ with respect to~$\cfulltri{}$.

  Thus, in any cycle~$C\in\mathcal C$ that does not serve~$e$, we
  simply replace any occurrence~$e$ by~$p_e$ without increasing the
  cost of~$C$ with respect to~$\cfulltri{}$.

  For the cycle~$C\in\mathcal C$ that serves~$e$, we replace all but
  one occurrence of~$e$ by~$p_e$, again without increasing the cost
  of~$C$ with respect to~$\cfulltri{}$.

  Clearly, these replacements work in polynomial time.
\end{proof}

\noindent We now prove \autoref{thm:mainthm}.
\addtocounter{theorem}{-1}

\mainthm{}

\begin{proof}
  Let $I:=(G,c,d,W)$~be a \CARP{} instance and let $\OPT$~denote the
  cost of an optimal solution.  The edge cost functions~$\chalftri{}$
  and~$\cfulltri{}$ can clearly be computed from~$c$ in polynomial
  time.  Thus, $I':=(G,\cfulltri{},d,W)$ is \pt{}
  computable and satisfies the triangle inequality.

  Let $(\mathcal C^*,s^*)$~be an optimal solution
  to~$(G,\cfulltri{},d,W)$ and let its cost be~$\OPTfulltri{}$.  By
  \autoref{lem:fulltri}, we may assume that $(\mathcal
  C^*,s^*)$~contains every edge of $R$ exactly once.  Hence,
  $(\mathcal C^*,s^*)$~is a solution of cost~$\OPTfulltri{}+r$
  for~$(G,\chalftri{},d,W)$ and, by \autoref{lem:halftri}, can be
  transformed into a solution~$(C',s'$) of cost~$\OPTfulltri{}+r$
  for~$(G,c,d,W)$.  Moreover, $(\mathcal C',s')$~is an optimal
  solution for~$(G,c,d,W)$ since, by \autoref{lem:halftri} and
  \autoref{obs:obsobs}, a cheaper solution of cost less
  than~$\OPTfulltri{}+r$ for~$(G,c,d,W)$ would imply a solution of
  cost less than~$\OPTfulltri{}$ for~$(G,\cfulltri{},d,W)$.  It
  follows that $\OPT\geq\OPTfulltri{}+r$.

  Now, assume that $(\mathcal C,s)$ is a solution for~$(G,\cfulltri{},d,W)$ of cost~$\alpha\cdot\OPTfulltri{}$.  We
  transform it into a solution of cost~$\alpha\cdot\OPT$ for~$(G,c,d,W)$ in
  polynomial time.  \autoref{lem:fulltri} allows us to assume
  that~$(\mathcal C,s)$ contains every edge of~$R$ exactly once, it
  follows that~$(\mathcal C,s)$~is a solution of cost~$\alpha\cdot
  \OPTfulltri{}+r$ for~$(G,\chalftri{},d,W)$ and, by
  \autoref{lem:halftri}, is \pt{} transformable into a
  solution of the same cost for~$(G,c,d,W)$.  Finally, since
  $\OPTfulltri{}+r\leq\OPT$, it follows that $(\mathcal C,s)$~has cost
  at most~$\alpha\cdot\OPT$.  
\end{proof}

\section{Conclusion}
\noindent We have shown that the triangle inequality is is not
necessary for finding good approximate solutions to \CARPlong{}, since
one can almost always replace an edge by a shortest path.

Our proof can be carried out analogously for variants of \CARP{} on
directed graphs: set the cost of any arc~$(u,v)$ to the cost of a
shortest directed path from~$u$ to~$v$.  However, it does not work for
graphs that have a mixture of directed and undirected edges, which
also appear in applications~\citep{Woe08}: it is not clear whether the
cost of an undirected edge~$\{u,v\}$ should be set to the length of a
shortest path from~$u$ to~$v$ or from~$v$ to~$u$.  It would be
interesting to show approximation results for this problem variant.

\section*{Acknowledgements}
\noindent We thank Sanne Wøhlk for valuable comments.
René van Bevern and Manuel Sorge acknowledge support of the Deutsche
Forschungsgemeinschaft (DFG), project DAPA (NI 369/12).
{
\small
\setlength\bibsep{0pt}
\bibliographystyle{abbrvnat}
\bibliography{tv_arp_ch02}
}

\end{document}